\newcommand{\mund}{\mathunderscore}
\newcommand{\R}{\ensuremath{\mathcal{R}}\xspace}
\newcommand{\G}{\ensuremath{\mathcal{G}}\xspace}
\newcommand{\I}{\ensuremath{\mathcal{I}}\xspace}
\newcommand{\T}{\ensuremath{\mathcal{T}}\xspace}
\newcommand{\J}{\ensuremath{\mathcal{J}}\xspace}
\newcommand{\journeys}{\J\xspace}
\newcommand{\TC}{{\footnotesize TC}\xspace}
\newcommand{\TCs}{{\footnotesize TC}s\xspace}
\newcommand{\TTC}{{\footnotesize TTC}\xspace}
\newcommand{\TTCs}{{\footnotesize TTC}s\xspace}
\newcommand{\Rtuple}{{\footnotesize R}-tuple\xspace}
\newcommand{\Rtuples}{{\footnotesize R}-tuples\xspace}
\newcommand{\alga}{\texttt{{add\mund{}contact(u, v, t)}}\xspace}
\newcommand{\algb}{\texttt{can\mund{}reach(u, v, $t_1$, $t_2$)}\xspace}
\newcommand{\algc}{\texttt{is\mund{}connected($t_1$, $t_2$)}\xspace}
\newcommand{\algd}{\texttt{reconstruct\mund{}journey(u, v, $t_1$, $t_2$)}\xspace}
\newcommand{\alganame}{\texttt{{add\mund{}contact}}\xspace}
\newcommand{\algbname}{\texttt{can\mund{}reach}\xspace}
\newcommand{\algcname}{\texttt{is\mund{}connected}\xspace}
\newcommand{\algdname}{\texttt{reconstruct\mund{}journey}\xspace}
\newtheorem{definition}{Definition}
\newtheorem{observation}{Observation}
\newtheorem{theorem}{Theorem}
\newtheorem{lemma}[theorem]{Lemma}
\tikzset{
  vertex/.style={
    draw,
    fill=white,
    circle,
    semithick,
    inner sep=0pt,
    minimum size=7pt,
    font=\footnotesize,
  },
  arc/.style={
    -{Latex[length=5pt]},
    semithick,
    font=\scriptsize,
  },
}
\title{
  A Dynamic Data Structure for Temporal Reachability\\ with Unsorted Contact Insertions
}
\author[1,2]{Luiz F. Afra Brito}
\author[1]{Marcelo Albertini}
\author[2]{Arnaud Casteigts}
\author[1]{Bruno A. N. Traven\c{c}olo}
\affil[1]{Federal University of Uberlândia, Brazil}
\affil[2]{University of Bordeaux, France}
\begin{document}

\maketitle
\begin{abstract}
  Temporal graphs represent interactions between entities over the time. These interactions may be direct (a contact between two nodes at some time instant), or indirect, through sequences of contacts called temporal paths (journeys).
    Deciding whether an entity can reach another through a journey is useful for various applications in communication networks and epidemiology, among other fields.
    In this paper, we present a data structure which maintains temporal reachability information under the addition of new contacts (\textit{i.e.,} triplets $(u,v,t)$ indicating that node $u$ and node $v$ interacted at time $t$). In contrast to previous works, the contacts can be inserted in arbitrary order---in particular, non-chronologically---which corresponds to systems where the information is collected a posteriori (e.g. when trying to reconstruct contamination chains among people). 
    The main component of our data structure is a generalization of transitive closure called \emph{timed transitive closure} (\TTC), which allows us to maintain reachability information relative to all nested time intervals, without storing all these intervals, nor the journeys themselves. \TTCs are of independent interest and we study a number of their general properties.
    Let $n$ be the number of nodes and $\tau$ be the number of timestamps in the \emph{lifetime} of the temporal graph. Our data structure answers reachability queries regarding the existence of a journey from a given node to another within given time interval in time
    $O(\log\tau)$; it has an amortized insertion time of $O(n^2\log\tau)$; and it can reconstruct a valid journey that witnesses reachability in time $O(k\log\tau)$, where $k<n$ is the maximum number of edges of this journey.
    Finally, the space complexity of our reachability data structure is $O(n^2\tau)$, which remains within the worst-case size of the temporal graph itself.

\end{abstract}

\section{Introduction}

Temporal graphs represent interactions between entities over the time.
These interactions often appear in the form of contacts at specific timestamps.
Moreover, entities can also interact indirectly with each others by chaining several contacts over time.
For example, in a communication network, devices that are physically connected can send new messages or propagate received ones; thus, by first sending a new message and, then, repeatedly propagating messages over time, remote entities can communicate indirectly.
Time-respecting paths in temporal graphs are known as temporal paths, or simply \emph{journeys}, and when a journey exists from one node to another node, we say that the first can \emph{reach} the second.

In a computational environment, it is often useful to check whether entities can reach each other.
Investigations on temporal reachability have been used
for characterizing mobile and social networks~\cite{tang2010characterising};
for validating protocols and better understanding communication networks~\cite{cacciari1996atemporal, whitbeck2012temporal};
for checking the existence of trajectories and improving flow in transportation networks~\cite{wu2017mining, williams2016spatio, bedogni2018temporal};
for assessing future states of ecological networks~\cite{martensen2017spatio};
and for making plans for agents using automation networks~\cite{bryce2007atutorial}.
Beyond the sole reachability, some applications require the ability to reconstruct a concrete journey if one exists.
For example, journey reconstruction has been used
for finding and visualizing detailed trajectories in transportation networks~\cite{wu2017mining, betsy2007spatio, zeng2014visualizing, hasan2011making};
for visualizing system dynamics~\cite{hurter2014bundled};
and for matching temporal patterns in temporal graph databases~\cite{vera2016querying,LVM18}.

In standard graphs, the problem of maintaining reachability information under various modifications of the graph is known as \emph{dynamic connectivity} and has been extensively studied~\cite{agrawal1989efficient, haixun2006dual, cohen2003reachability,zhu2014reachability, seufert2013ferrari,wei2018reachability}. Here, the adjective \emph{dynamic} does not refer to the temporal nature of the network, it refers to the fact that the computed information is to be updated after the input graph is changed. This maintenance is performed by means of a \emph{dynamic data structure}, which stores intermediate information to speed up the query time (and its own update time after a change). Three types of dynamic data structures are classically considered, depending on the type of change allowed, namely \emph{incremental} (insertion only), \emph{decremental} (deletion only), and \emph{fully-dynamic} (both). Typically, the elements to be inserted and removed in the classical version are the edges of the graph.

In the case of temporal graphs, the elements to be inserted are not edges but contacts, which are edges together with a timestamp, which indicates that the two corresponding nodes interacted at this particular time. An important aspect of data structures which manipulate contacts is whether the order of insertion respects the order of the interactions themselves, i.e., the insertions are \emph{chronological}. Algorithms for updating reachability information with the assumption that the input is chronological have been proposed~\cite{barjon2014testing,whitbeck2012temporal}. However, this assumption does not capture important use cases where the contacts are collected in an unpredictable order and the reachability information is updated afterwards.
For instance, during scenarios of epidemics, outdated information containing the interaction details among infected and non-infected individuals are reported in arbitrary order.
Then, this information is periodically queried in order to better understand the dissemination process, and then take appropriate measures for reducing reachability and identifying sources of contamination~\cite{xiao2018reconstructing, xiao2018robust,enright2018deleting,rozenshtein2016reconstructing}.


Motivated by such scenarios, we investigate the problem of maintaining an incremental data structure for temporal reachability, where the insertions of contacts are made in arbitrary order.
A naive approach is to store and update the temporal graph itself (e.g., as a set of contact), then run standard journey computation algorithms like~\cite{xuan2003computing} when a query is made. However, the goal of a data structure is to reduce the computational cost of the queries based on pre-computing intermediate information. In fact, data structures typically offer a tradeoff between query time, update time, and space. To the best of our knowledge, the only existing work supporting non-chronological contact insertions and exploiting intermediate representations for speeding up reachability queries in temporal graphs is~\cite{wu2016reachability}. The solution in~\cite{wu2016reachability} relies on maintaining a directed acyclic graph (DAG) in which every original vertex is possibly copied up to $\tau$ times (where $\tau$ is the number of timestamps) and a journey exists from $u$ to $v$ in the interval $[t, t']$ if and only if and vertex $u_t$ can reach vertex $v_{t'}$ in the DAG. Some paths preprocessing is additionally considered that results in an average speed up for reachability queries. However, the worst-case query time corresponds to a standard path search (e.g. depth first search) in the DAG, which takes $\Theta(n^2\tau)$ time in the case of dense temporal graphs (whose number of contacts is of the same order). The space complexity (size of the DAG) also corresponds essentially to the number of contacts, thus $\Theta(n^2\tau)$ in the worst case. Finally, the update time upon insertion is quite efficient, because the DAG representation allows its effect to remain local. If one ignores the cost of paths preprocessing in~\cite{wu2016reachability} (as we focus on worst-case analysis), it only takes $O(1)$ time to update the DAG if the corresponding nodes are already known, and up to $\Theta(\tau)$ otherwise, due to the creation of (up to) $\tau$ copies of the new nodes.

\subsection{A data structure for unsorted contact insertions}
\label{sec:primitives}
In this paper, we consider the problem of maintaining reachability information through a data structure\footnote{We have a simple implementation available at \url{https://github.com/albertiniufu/dynamictemporalgraph/}} that supports the following four operations, where by convention, $\mathcal{G}$ is a temporal graph, $u$ and $v$ are vertices of $\mathcal{G}$, and $t, t_1,$ and $t_2$ are timestamps:
\begin{itemize}
    \item \alga: Update information based on a contact from $u$ to $v$ at time $t$
    \item \algb: Return true if $u$ can reach $v$ within the interval $[t_1, t_2]$
    \item \algc: Return true if $\mathcal{G}$ restricted to the interval $[t_1,t_2]$ is temporally connected, \textit{i.e.} all vertices can reach each other within the interval $[t_1, t_2]$
    \item \algd: Return a journey (if one exists) from $u$ to $v$ occurring within the interval $[t_1, t_2]$
\end{itemize}

For generality, we consider directed contacts (timed arcs). Furthermore, if $t_1$ and $t_2$ are omitted in the above operations, then the entire lifetime of $\mathcal{G}$ is considered.
The challenge in realizing these operations is to answer queries as fast as possible, while keeping space consumption and update time at reasonable levels. The worst-case complexities are as follows: the query operations, \algb{} and \algc{}, run, respectively, in $O(\log\tau)$ and $O(n^2\log\tau)$ time; the
update operation, \alga{}, runs in $O(n^2\log\tau)$ amortized time;
and the retrieval operation, \algd{}, runs in $O(k\log\tau)$ time, where $k < n$ is the length of the resulting journey. The worst-case space complexity remains within the worst-case size of the temporal graph itself, namely $O(n^2\tau)$. Overall, the space complexity is comparable to that of~\cite{wu2016reachability}, while the query time is much faster and the update time is slower.

The core of our data structure is a component called the \emph{timed transitive closure} (\TTC), which generalizes the classical notion of a transitive closure (\TC). Classical TCs capture reachability information among vertices over the entire lifetime of the network. They are classically encoded as a static directed graph where the existence of an arc from $u$ to $v$ implies that there is a journey $u$ to $v$ in the temporal graph. If one is not interested in querying reachability for specific subintervals, and if the contacts are inserted in chronological order, then \TCs are actually sufficient for maintaining temporal reachability information (see e.g.~\cite{barjon2014testing}). A generalization of \TC has also been considered in~\cite{whitbeck2012temporal}, which allows queries to be parametrized by a maximum journey duration, however basic journey information, such as departure and arrival times, are not known and the computation of the structure requires the information to be processed at once and chronologically (i.e. subsequent updates are not supported).

In the unsorted (i.e., non-chronological) case, \TCs do not provide enough information to decide whether a new contact (possibly occurring at any point in history) can be composed with known journeys. To address this need, we introduce a generalization of \TCs called \emph{timed transitive closure} (\TTCs), which store information regarding the availability of journeys for a well-chosen set of time intervals, without storing the journeys themselves. We study the general properties of \TTCs and we prove, in particular, that one can restrict the number of intervals considered to $O(\tau)$ for any pair of nodes (as opposed to $O(\tau^2)$), with immediate consequences on the space complexity of a data structure based on \TTCs. This information is then exploited by our data structure algorithms.

\subsection{Organization of the document}
This paper is organized as follows.
In Section~\ref{sec:definitions}, we present basic definitions.
In Section~\ref{sec:timed-transitive-closure}, we introduce timed transitive closures, study their basic properties, and provide a number of low-level primitives for manipulating them.
In Section~\ref{sec:data-structure}, we describe the algorithms that perform each operation of our data structure based on \TTCs, together with their running time complexities.
Finally, Section~\ref{sec:conclusions} concludes with some remarks and open questions.

\section{Definitions}\label{sec:definitions}

Following the formalism in~\cite{casteigts2012time}, a temporal graph can be generally represented by a tuple $\mathcal{G} = (V, E, \mathcal{T}, \rho, \zeta)$, where $V$ is a set of vertices,  $E \subseteq{} V \times V$ is a set of edges,
 $\mathcal{T}$ is the time interval over which the network exists (\emph{lifetime}),
 $\rho: E \times \mathcal{T} \to\{0, 1\}$ is a \emph{presence function} that expresses whether a given edge is present at a given time instant,
and $\zeta: E \times\mathcal{T} \mapsto \mathbb{T}$ is a latency function that expresses the duration of an interaction for a given edge at a given time, where $\mathbb{T}$ is the time domain (typically $\mathbb{R}$ or $\mathbb{N}$).
In this paper, we consider a setting where $E$ is a set of arcs (directed edges), $\mathbb{T}$ is equal to $\mathbb{N}$  (time is discrete) and  $\mathcal{T} = [1, \tau] \subseteq \mathbb{T}$ (the lifetime contains $\tau$ timestamps). The latency function is constant, namely $\zeta = \delta$, where $\delta$ is any fixed positive integer (typically $0$ or $1$).
We call $(u, v, t)$ a \emph{contact} in ${\mathcal{G}}$ if $\rho((u, v), t) = 1$.
We use a short-hand notation $\mathcal{G}_{[t_1, t_k]}$ when restricting the lifetime of ${\mathcal G}$ to a subinterval $[t_1, t_k]\subseteq \mathcal{T}$, and call $\mathcal{G}_{[t_1, t_k]}$ a temporal subgraph of $\mathcal{G}$. Finally, the static graph $G=(V,E)$ is called the \emph{underlying graph} of $\G$.

Reachability in temporal graphs is defined in a time-respecting way, by requiring that a path travels along increasing times (resp. non-decreasing times) if $\delta \ge 1$ (resp. if $\delta=0$). These paths are called temporal paths or journeys, interchangeably.

\begin{definition}[Journey]\label{def:journey}
  A journey from $u$ to $v$ in $\G$ is a sequence of contacts $\mathcal{J} = \langle c_1, c_2, \ldots, c_k \rangle$, whose sequence of underlying arcs form a valid $(u,v)$-path in the underlying graph $G$ and for each contact $c_i = (u_i, v_i, t_i)$, it holds that $\rho((u_i, v_i), t_i) = 1$ and $t_{i+1} \ge t_i + \delta$ for each $i \in [1, k-1]$.
  Additionally, we say that $departure (\mathcal{J}) = t_1$, $arrival (\mathcal{J}) = t_{k} + \delta$ and $duration (\mathcal{J}) = arrival (\mathcal{J})- departure (\mathcal{J})$.
\end{definition}

A journey is \emph{trivial} if it consists of a single contact.


\begin{definition}[Reachability]%
    \label{def:reachability}
    A vertex $u$ can \emph{reach} a vertex $v$ within time interval $[t_1, t_2]$ iff there exists a journey $\J$ from $u$ to $v$ in $\mathcal{G}_{[t_1, t_2]}$ (i.e., such that $departure(\J) \ge t_1$ and $arrival(\J) \le t_2$).
\end{definition}

The (standard) \emph{transitive closure} (\TC) of a temporal graph $\mathcal{G}$ is a static directed graph $\mathcal{G}^*=(V, E^*)$ such that $(u, v) \in E^*$ if and only if $u$ can reach $v$ in $\mathcal{G}$.
This notion is illustrated in Figure~\ref{fig:transitive-closure}.
As already explained, if the contacts are discovered in a chronological order, then $\mathcal{G}^*$ can be updated incrementally to compute the entire reachability information of $\G$~\cite{barjon2014testing}. In the unsorted case, however, this notion is not sufficient because it does not allows one to decide if a new contact can be composed with previously-known journeys, which motivates the definition of more powerful objects.

\begin{figure}[h]
  \centering
  \begin{tikzpicture}[xscale=3, yscale=2.5]
    \node[vertex] at (162:8mm) (a) {}
    node[left = -2pt of a] {$a$};
    \node[vertex] at (90:7mm)  (b) {}
    node[above= -2pt of b] {$b$};
    \node[vertex] at (270:2mm) (c) {}
    node[below= -2pt of c] {$c$};
    \node[vertex] at (378:8mm) (d) {}
    node[right= -2pt of d] {$d$};
    \tikzstyle{every node}=[inner sep=1pt]
    \draw (a) edge[arc] node[above left] {$2$} (b);
    \draw (b) edge[arc, bend left=15] node[above right] {$4$} (d);
    \draw (b) edge[arc, bend right=15] node[below left] {$1$} (d);
    \draw (a) edge[arc, bend left=15] node[above right] {$4$} (c);
    \draw (c) edge[arc, bend left=15] node[below left] {$4$} (a);
    \draw (c) edge[arc] node[below right] {$5$} (d);
  \end{tikzpicture}
  ~~~
  \begin{tikzpicture}[xscale=3, yscale=2.5]
    \node[vertex] at (162:8mm) (a) {}
    node[left = -2pt of a] {$a$};
    \node[vertex] at (90:7mm)  (b) {}
    node[above= -2pt of b] {$b$};
    \node[vertex] at (270:2mm) (c) {}
    node[below= -2pt of c] {$c$};
    \node[vertex] at (378:8mm) (d) {}
    node[right= -2pt of d] {$d$};
    \draw (a) edge[arc] (b);
    \draw (b) edge[arc] (d);

    \draw (a) edge[arc, bend left=15] (c);
    \draw (c) edge[arc, bend left=15] (a);
    \draw (c) edge[arc] (d);

    \draw (a) edge[arc, bend left=0] (d);
  \end{tikzpicture}
  \caption{\label{fig:temporal-graph}
    (Left) A temporal graph $\G$ on four vertices $V = \{ a, b, c, d \}$, where the presence times of the arcs are depicted by labels. Whether $\delta = 0$ or $1$, this graph has only two non-trivial journeys, namely $\mathcal{J}_1 = \langle(a, b, 2), (b, d, 4)\rangle$ and $\mathcal{J}_2 = \langle(a, c, 4), (c, d, 5)\rangle$.
    (Right) Transitive closure $\mathcal{G}^*$. Note that $\mathcal{J}_1$ and $\mathcal{J}_2$ are represented by the same arc in $\G^*$ (the two contacts from $b$ to $d$ as well).}\label{fig:transitive-closure}
\end{figure}
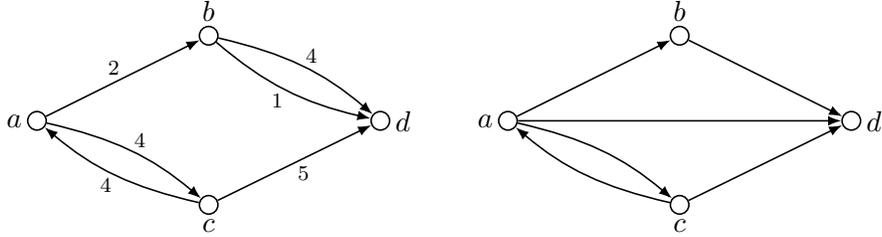

\section{Reachability tuples and Timed Transitive Closure}\label{sec:timed-transitive-closure}

In this section, we describe an extension of the concept of transitive closure called \emph{timed transitive closure} (\TTC).
The purpose of \TTCs is to encode reachability information among the vertices, parametrized by time intervals, so that one can subsequently decide if a new contact occurring anywhere in history can be composed with existing journeys.
The main components of \TTCs are called \emph{reachability tuples} (\Rtuples).
We introduce a number of operators on \Rtuples{}, such as inclusion and concatenation, and describe their role in the construction and maintenance of a \TTC{}.

\subsection{Reachability tuples (\Rtuples)}\label{ssec:R-tuples}

Just as the number of paths in a static graph, the number of journeys in a temporal graph could be too large to be stored explicitly (typically, factorial in $n$). To avoid this problem, \Rtuples capture the fact that a node can reach another within a certain time interval without storing the corresponding journeys. Thus, a single \Rtuple may capture the reachability information corresponding to many possible journeys. We distinguish between two versions of \Rtuples, namely (\emph{existential}) \Rtuples and \emph{constructive} \Rtuples, the latter adding information for reconstructing a journey that witnesses reachability.

\subsubsection{Existential \Rtuples}
The following definitions are given in the context of a temporal graph $\mathcal{G}$ whose vertex set is $V$, lifetime is $\mathcal{T}=[1,\tau]$, and latency is $\delta$.

\begin{definition}[\Rtuple] An (existential) \Rtuple{} is a quadruplet $r=(u, v, t^-, t^+)$, where $u$ and $v$ are vertices in $\mathcal{G}$, and $t^-$ and $t^+$ are timestamps in $\mathcal{T}$. It encodes the fact that node $u$ can reach node $v$ through a journey $\mathcal{J}$ such that $departure (\mathcal{J}) = t^-$ and $arrival (\mathcal{J}) = t^+$.
  If several such journeys exist, then they are all captured by the same \Rtuple.
\end{definition}

The set of journeys captured by an \Rtuple $r$ is denoted by $\journeys(r)$, and we say that $r$ \emph{represents} these journeys.
An \Rtuple{} is \emph{trivial} when it represents a trivial journey (\textit{i.e.}, a single contact). Trivial \Rtuples thus have the form $(u, v, t, t+\delta)$ for some $t$.
The following relations and operations are quite natural to define.

\begin{definition}[Precedence $\prec$]\label{def:tuple-precedence}
  An interval $I_1=[t^-_1, t^+_1]$ precedes an interval $I_2=[t^-_2, t^+_2]$, noted $I_1 \prec I_2$, if $t^+_1 \leq{} t^-_2$.
  Given two \Rtuples{} $r_1 = (u_1, v_1, t^-_1, t^+_1)$ and $r_2 = (u_2, v_2, t^-_2, t^+_2)$, $r_1$ precedes $r_2$, noted $r_1 \prec{} r_2$ if $t^+_1 \leq{} t^-_2$ and $u_2 = v_1$.
\end{definition}

Intuitively, the precedence relation among \Rtuples{} tells us that the journeys they represent can be composed, leading to another \Rtuple as follows:

\begin{definition}[Concatenation $\cdot$]\label{def:tuple-composition}
  Given two \Rtuples{} $r_1 = (u_1, v_1, t^-_1, t^+_1)$ and $r_2 = (u_2, v_2, t^-_2, t^+_2)$ such that $r_1 \prec{} r_2$, the concatenation of $r_1$ with $r_2$ is the \Rtuple $r_1 \cdot r_2 = (u_1, v_2, t^-_1, t^+_2)$.
\end{definition}

The natural inclusion among intervals extends to \Rtuples as follows:

\begin{definition}[Inclusion $\subseteq$]\label{def:tuple-inclusion}
  Given two \Rtuples{} $r_1 = (u_1, v_1, t^-_1, t^+_1)$ and $r_2 = (u_2, v_2, t^-_2, t^+_2)$, $r_1 \subseteq{} r_2$ if and only if $u_1 = u_2$, $v_1 = v_2$, and $[t^-_1, t^+_1] \subseteq [t^-_2,t^+_2]$ (\textit{that is,} $t^-_2 \leq{} t^-_1 \leq{} t^+_1 \leq{} t^+_2$).
\end{definition}

If neither $r_1 \subseteq r_2$ nor $r_2 \subseteq r_1$ (or if the vertices are different), then $r_1$ and $r_2$ are called \emph{incomparable}.
Intuitively, if $r_1 \subseteq r_2$, then any of the journeys represented by $r_2$ could be replaced by a (possibly faster) journey represented by $r_1$. More precisely:

\begin{lemma}\label{lemma:discard}
  Let $u$ and $v$ be two nodes in $V$. Let $\mathcal{I}_1 = [t_1^-, t_1^+]$ and $\mathcal{I}_2 = [t_2^-, t_2^+]$ be two subintervals of $\T$ such that $\mathcal{I}_1 \subseteq \mathcal{I}_2$. If $u$ can reach $v$ within $\mathcal{I}_1$, then $u$ can reach $v$ within $\mathcal{I}_2$.
\end{lemma}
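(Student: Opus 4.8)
This lemma is essentially immediate from the definitions, so the proof plan is short.

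The plan is to unpack the definition of reachability within an interval (Definition~\ref{def:reachability}) and observe that a journey witnessing reachability within $\mathcal{I}_1$ already witnesses reachability within $\mathcal{I}_2$. Concretely, suppose $u$ can reach $v$ within $\mathcal{I}_1 = [t_1^-, t_1^+]$. By Definition~\ref{def:reachability} there is a journey $\mathcal{J}$ from $u$ to $v$ with $departure(\mathcal{J}) \ge t_1^-$ and $arrival(\mathcal{J}) \le t_1^+$. Since $\mathcal{I}_1 \subseteq \mathcal{I}_2$, Definition~\ref{def:tuple-inclusion} (or rather the plain inclusion of intervals) gives $t_2^- \le t_1^-$ and $t_1^+ \le t_2^+$. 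Chaining these inequalities yields $departure(\mathcal{J}) \ge t_1^- \ge t_2^-$ and $arrival(\mathcal{J}) \le t_1^+ \le t_2^+$, so the very same journey $\mathcal{J}$ lies in $\mathcal{G}_{[t_2^-, t_2^+]}$ and therefore $u$ can reach $v$ within $\mathcal{I}_2$.

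The only point that warrants a word of care — and the closest thing to an ``obstacle'' here — is making sure that $\mathcal{J}$ remains a valid journey in the restricted temporal subgraph $\mathcal{G}_{\mathcal{I}_2}$: every contact of $\mathcal{J}$ occurs at a timestamp lying between $departure(\mathcal{J})$ and $arrival(\mathcal{J})$, hence between $t_1^-$ and $t_1^+$, hence within $[t_2^-, t_2^+]$; and the time-respecting condition $t_{i+1} \ge t_i + \delta$ is a property of $\mathcal{J}$ itself and is unaffected by restricting the lifetime. So no contact is lost when passing to $\mathcal{G}_{\mathcal{I}_2}$, and the journey structure is preserved. This completes the argument; no induction, case analysis, or auxiliary construction is needed.
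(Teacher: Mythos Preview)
Your proof is correct and follows essentially the same approach as the paper: both argue that a journey $\mathcal{J}$ witnessing reachability within $\mathcal{I}_1$ already serves as a witness within $\mathcal{I}_2$, the paper phrasing this via ``waiting'' at $u$ and $v$, you via the direct chain of inequalities on $departure(\mathcal{J})$ and $arrival(\mathcal{J})$. Your version is, if anything, slightly more careful in verifying that $\mathcal{J}$ remains valid in $\mathcal{G}_{\mathcal{I}_2}$.
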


\begin{proof}
    The proof is straightforward, we give it for completeness. Let $r$ be the \Rtuple $(u, v, t_1^-, t_1^+)$ and let $\J$ be any of the journeys in $\J(r)$. One can reach $v$ from $u$ within $\mathcal{I}_2$ through the three following steps: (1) wait at $u$ from $t_2^-$ to $t_1^-$, (2) travel from $u$ to $v$ using $\J$, and finally (3) wait at $v$ from $t_1^+$ to $t_2^+$.
\end{proof}

The main consequence of Lemma~\ref{lemma:discard} is that if $r_1 \subseteq r_2$, then $r_2$ is redundant for answering reachability queries from $u$ to $v$.

\begin{definition}[Redundancy]\label{def:tuple-redundancy}
  Let $S$ be a set of \Rtuples{} and let $r \in S$, $r$ is called \emph{redundant} in $S$ if there exists $r' \in S$ such that $r' \subseteq r$. A set with no redundant \Rtuple is called \emph{irredundant}.
\end{definition}

An \Rtuple that is non-redundant in a set is also called \emph{minimal} (in that set).
It is natural to ask what the maximum size of an \emph{irredundant} set of \Rtuples could be, with consequences for the space complexity of a reachability data structure based on \Rtuples. It turns out that this number is always significantly smaller than the number of possible \Rtuples.



\begin{lemma}\label{lemma:number}
  The maximum size of an irredundant set of \Rtuples for \G is $\Theta(n^2\tau)$.
\end{lemma}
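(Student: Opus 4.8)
The statement has two directions: an upper bound showing every irredundant set has size $O(n^2\tau)$, and a lower bound exhibiting an irredundant set of size $\Omega(n^2\tau)$. The plan is to fix a pair of nodes $(u,v)$ and bound the number of pairwise-incomparable \Rtuples{} of the form $(u,v,\cdot,\cdot)$; multiplying by $n^2$ pairs then gives the global bound.

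\textbf{Upper bound.} Fix $u,v$ and let $S_{uv}$ be the \Rtuples{} in an irredundant set $S$ with first two coordinates $u,v$. By Definition~\ref{def:tuple-redundancy}, no two elements of $S_{uv}$ are in the $\subseteq$-relation, so the associated time intervals $[t^-,t^+]\subseteq[1,\tau]$ form an antichain under interval inclusion. The key combinatorial fact is that an antichain of subintervals of $[1,\tau]$ has size at most $\tau$: the classical argument is that if $I_1,\dots,I_m$ are pairwise incomparable intervals, sorting them by left endpoint $t^-_1 < t^-_2 < \cdots$ forces the right endpoints to also be strictly increasing, $t^+_1 < t^+_2 < \cdots$ (otherwise $I_j\subseteq I_i$ for some $i<j$), and since left endpoints must be distinct we get $m\le\tau$. (One must handle the edge case where two intervals share a left endpoint — then incomparability forces distinct right endpoints but then the one with smaller right endpoint is contained in the other, contradiction — so in fact left endpoints are automatically distinct in an antichain.) Hence $|S_{uv}|\le\tau$, and summing over all $n^2$ ordered pairs gives $|S|\le n^2\tau$, i.e. $O(n^2\tau)$.

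\textbf{Lower bound.} I need a temporal graph $\G$ on $n$ nodes and lifetime $\tau$ in which, for $\Omega(n^2)$ pairs $(u,v)$, there are $\Omega(\tau)$ pairwise-incomparable \Rtuples. The natural construction: for each ordered pair $(u,v)$ create, for every $t\in\{1,\dots,\lfloor\tau/2\rfloor\}$ (or a similar linear-sized family), a journey from $u$ to $v$ with departure $t$ and arrival roughly $\tau - t$, so that the intervals $[t,\tau-t]$ are strictly nested-decreasing and hence pairwise incomparable — giving $\tau/2$ minimal \Rtuples per pair. The cleanest realization is to route all pairs through a shared "time-spreading" gadget: e.g., use a few auxiliary timestamps so that from any $u$ one can, at time $t$, hop to $v$ and then the journey can be stretched; but care is needed because with latency $\delta$ and only simple contacts one must actually build journeys realizing each interval. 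A concrete option: take nodes $u,v$ plus a path gadget, and for the pair $(u,v)$ place direct contacts $(u,v,t)$ for all $t$ — but those give intervals $[t,t+\delta]$, which are pairwise incomparable (distinct left endpoints, equal length), already yielding $\tau$ incomparable trivial \Rtuples per pair. Using an arc $(u,v)$ present at every timestamp simultaneously for all $\Theta(n^2)$ pairs gives $\Theta(n^2\tau)$ minimal \Rtuples total. I would verify this example has the claimed parameters ($n$ nodes, $\tau$ timestamps, every such \Rtuple{} genuinely minimal because no two are comparable) to conclude $\Omega(n^2\tau)$.

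\textbf{Main obstacle.} The upper bound is the substantive part and the antichain-of-intervals lemma is its crux; the only subtlety is being careful about ties in endpoints and about whether the bound is $\tau$ or $\tau+1$ (constants are absorbed by $\Theta$). The lower bound is easy once one realizes trivial \Rtuples{} with a common underlying arc present at all $\tau$ timestamps already form an antichain of size $\tau$ per pair, so the real work is just confirming that such a graph is a legitimate instance and that all these \Rtuples{} coexist in one irredundant set.
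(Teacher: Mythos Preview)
Your proposal is correct and follows essentially the same approach as the paper: for the upper bound, you fix a pair $(u,v)$ and observe that distinct departure times are forced in an antichain of intervals (the paper phrases this identically), yielding at most $\tau$ per pair; for the lower bound, your final construction---every arc present at every timestamp, so that the $\Theta(n^2\tau)$ trivial \Rtuples{} $(u,v,t,t+\delta)$ are pairwise incomparable---is exactly the paper's example $\mathcal{K}_{n,\tau}$. Your initial detour through nested intervals $[t,\tau-t]$ is unnecessary, as you yourself note.
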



\begin{proof}
  First, we prove that the maximum number of pairwise incomparable \Rtuples is $O(n^2\tau)$. Then, we show that this bound is tight, as some graphs induce $\Theta(n^2\tau)$ incomparable \Rtuples.\medskip\\
  \textit{(1) Upper bound:}
  There are $\Theta(n^2)$ ordered pairs of vertices. Thus, it is sufficient to show that for each pair $(u, v)$, the number of incomparable \Rtuples whose starting vertex is $u$ and whose ending vertex is $v$ is $\Theta(\tau)$. Let $S$ be an irredundant set of such \Rtuples, and let $r_1=(u,v,t_1^-,t_1^+)$ and $r_2=(u,v,t_2^-,t_2^+)$ be any two \Rtuples{} in $S$.
  If $t^-_1 = t^-_2$, then either $r_1 \subseteq r_2$ or $r_2 \subseteq r_1$, thus $S$ is redundant (contradiction). As a result, all departure timestamps $t^-_i$ belonging to the \Rtuples{} in $S$ are different, which implies that $|S| \le \tau$.\medskip\\
  \textit{(2) Tightness:}
  Consider the complete temporal graph $\mathcal{K}_{n, \tau}$ on $n$ vertices in which every edge is present in \emph{all} timestamps in $[1, \tau]$. In such a graph, there are consequently $\Theta(n^2\tau)$ contacts, each of which is a trivial journey. Now, observe that either these journeys connect different vertices, or their intervals are incomparable (same duration with different starting times), thus none of them is redundant with the others.
\end{proof}

Given a graph \G and a set $S$ of \Rtuples representing all the journeys of \G, the subset $S' \subseteq S$ of all minimal \Rtuples is called the \emph{representative} \Rtuples of $\G$, denoted by $\R(\G)$. We also write $\R(u,v)$ for those \Rtuples in $\R(\G)$ whose source is $u$ and destination is $v$.
From the proof of Lemma~\ref{lemma:number}, we extract:

\begin{observation}\label{obs:trivial}
  Every contact of \G is present in $\R(\G)$ in the form of a trivial \Rtuple.
\end{observation}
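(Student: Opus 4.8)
The claim to prove is Observation~\ref{obs:trivial}: every contact of $\G$ appears in $\R(\G)$ as a trivial \Rtuple. The plan is to unfold the definition of $\R(\G)$ and show that a trivial \Rtuple coming from a contact can never be redundant, hence must survive into the minimal subset.

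First I would fix a contact $(u,v,t)$ of $\G$ and consider its associated trivial \Rtuple $r=(u,v,t,t+\delta)$. By Definition~\ref{def:journey}, the single contact $\langle (u,v,t)\rangle$ is a (trivial) journey with $departure=t$ and $arrival=t+\delta$, so $r$ indeed represents at least one journey of $\G$ and therefore belongs to the set $S$ of all \Rtuples representing journeys of $\G$. It remains to argue that $r$ is minimal in $S$, i.e., that no $r'\in S$ satisfies $r'\subsetneq r$, which by Definition~\ref{def:tuple-inclusion} would require $r'=(u,v,t'^-,t'^+)$ with $t\le t'^-\le t'^+\le t+\delta$ and $[t'^-,t'^+]\neq[t,t+\delta]$.

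The key step is to rule out such a strictly smaller interval. Any journey $\J$ from $u$ to $v$ has $arrival(\J)\ge departure(\J)+\delta$, since a journey contains at least one contact $(u_1,v_1,t_1)$ and $arrival(\J)=t_k+\delta\ge t_1+\delta=departure(\J)+\delta$ (using $t_{i+1}\ge t_i+\delta$ and $\delta\ge 0$). Hence for any $r'=(u,v,t'^-,t'^+)\in S$ we have $t'^+\ge t'^-+\delta$. Combined with $t\le t'^-$ and $t'^+\le t+\delta$, this forces $t'^-=t$ and $t'^+=t+\delta$, so $[t'^-,t'^+]=[t,t+\delta]$ and $r'=r$. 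Therefore $r$ is not redundant in $S$, so $r\in S'=\R(\G)$. Since this holds for every contact, the observation follows; the main (mild) obstacle is just being careful with the $\delta=0$ boundary case, which the inequality $arrival\ge departure+\delta$ handles uniformly.

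I would keep the write-up to a few lines, since, as the excerpt already notes, this fact is "extracted from the proof of Lemma~\ref{lemma:number}": there, $\mathcal{K}_{n,\tau}$ is used to show contacts give pairwise-incomparable \Rtuples, and the same incomparability argument specializes here to show each contact's trivial \Rtuple is incomparable with (hence not dominated by) any other \Rtuple on the same endpoint pair.
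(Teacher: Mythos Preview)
Your proof is correct and follows essentially the same idea the paper relies on: a trivial \Rtuple has an interval of length exactly $\delta$, which is the minimum possible length for any \Rtuple (since every journey satisfies $arrival \ge departure + \delta$), so no \Rtuple can be strictly contained in it. The paper does not give a standalone proof of the observation---it simply points back to the tightness argument in Lemma~\ref{lemma:number}---so your write-up is in fact more explicit than what the paper provides. One minor remark: in your final paragraph you say the trivial \Rtuple is ``incomparable with \ldots\ any other \Rtuple on the same endpoint pair,'' which overstates things (it can certainly be contained in a larger \Rtuple); what you actually proved, and what is needed, is only that nothing is strictly contained \emph{in} it.
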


Observation~\ref{obs:trivial} implies that $\R(\G)$ is a \emph{non-lossy} representation, as $\G$ itself is contained in it. The down side is that its space complexity is at least as large as the number of contacts in $\G$. Observe that, up to a constant factor, it can however not be worse than the worst number of contacts, since there may exist up to $\Theta(n^2\tau)$ contacts and irredundant sets cannot exceed this size (Lemma~\ref{lemma:number}). In other words, in dense temporal graphs, the reachability information offered by \Rtuples is essentially free in space.

\subsubsection{Constructive \Rtuples}
\label{sec:constructive}

The data structure considered in this work has four operations, namely \alganame, \algbname, \algcname, and \algdname. The first three operations can be dealt with using only existential \Rtuple. The fourth operation could benefit from storing a small amount of additional information into the \Rtuple.

\begin{definition}[Constructive \Rtuple]
  A constructive \Rtuple{} $r=(u, v, t^-, t^+, w)$ contains the same information as an existential \Rtuple, plus a vertex $w$ such that at least one journey $\J \in \J(r)$ starts with the contact $(u, w, t^-)$. Node $w$ is called the \emph{successor} of $u$ in~$r$ (resp., in $\J$).
\end{definition}

Most of the definitions and lemmas from Section~\ref{ssec:R-tuples} apply unchanged to constructive \Rtuple. In particular, the definition of redundant \Rtuples applies without considering the successor field. Indeed, if two constructive \Rtuples differ only by the successor node, then they are seen as equivalent and any of the two can be discarded. As for the concatenation of two constructive \Rtuples $r_1 = (u_1, v_1, t^-_1, t^+_1, w_1)$ and $r_2 = (u_2, v_2, t^-_2, t^+_2, w_2)$, provided $r_1 \prec r_2$, we additionally require that the resulting \Rtuple adopts the successor of $r_1$ as its own successor; that is, $r_1 \cdot r_2 = (u_1, v_2, t^-_1, t^+_2, w_1)$. For simplicity, whenever constructive \Rtuples are not needed, we describe the algorithms using existential \Rtuples.

\subsection{Timed Transitive Closure}\label{ssec:ttc}


Informally, the timed transitive closure of a temporal graph $\G$ is a multigraph that captures the existence of journeys within all possible time intervals, based on irredundant \Rtuples.

\begin{definition}[Timed transitive closure]\label{def:tc}
  Given a graph $\G$, the \emph{timed transitive closure} of $\G$, noted $TTC(\G)$, is a (static) directed multigraph on the same set of vertices, whose arcs correspond to the representative \Rtuples of $\G$.
\end{definition}

\begin{figure}[h]
  \centering
    \begin{tikzpicture}[scale=3]
      \node[vertex] at (162:8mm) (a) {}
      node[left = -2pt of a] {$a$};
      \node[vertex] at (90:7mm)  (b) {}
      node[above= -2pt of b] {$b$};
      \node[vertex] at (270:2mm) (c) {}
      node[below= -2pt of c] {$c$};
      \node[vertex] at (378:8mm) (d) {}
      node[right= -2pt of d] {$d$};
      \draw (a) edge[arc] node[above left] {$[2, 3]$} (b);
      \draw (b) edge[arc, bend left=15] node[above left = 3pt and -10pt] {$[4, 5]$} (d);
      \draw (b) edge[arc, bend right=15] node[above right = -5pt and -5pt] {$[1, 2]$} (d);
      \draw (a) edge[arc, bend left=15] node[below left= -3pt and -9pt] {$[4, 5]$} (c);
      \draw (c) edge[arc, bend left=15] node[below right= 5pt and -7pt] {$[4, 5]$} (a);
      \draw (c) edge[arc] node[below right] {$[5, 6]$} (d);

      \draw (a) edge[arc, bend left=6] node[above] {$[2, 5]$} (d);
      \draw (a) edge[arc, bend right=9] node[below] {$[4, 6]$} (d);
    \end{tikzpicture}
    ~~~~~~~~
    \begin{tikzpicture}[scale=3]
      \node[vertex] at (162:8mm) (a) {}
      node[left = -2pt of a] {$a$};
      \node[vertex] at (90:7mm)  (b) {}
      node[above= -2pt of b] {$b$};
      \node[vertex] at (270:2mm) (c) {}
      node[below= -2pt of c] {$c$};
      \node[vertex] at (378:8mm) (d) {}
      node[right= -2pt of d] {$d$};
      \draw (a) edge[arc] node[above left] {$([2, 3], b)$} (b);
      \draw (b) edge[arc, bend left=20] node[above left = 3pt and -15pt] {$([4, 5], d)$} (d);
      \draw (b) edge[arc, bend right=20] node[above right = -4pt and -3pt] {$([1, 2], d)$} (d);
      \draw (a) edge[arc, bend left=20] node[below left= -3pt and -11pt] {$([4, 5], c)$} (c);
      \draw (c) edge[arc, bend left=20] node[below right= 5pt and -11pt] {$([4, 5], a)$} (a);
      \draw (c) edge[arc] node[below right] {$([5, 6], d)$} (d);

      \draw (a) edge[arc, bend left=6] node[above] {$([2, 5], b)$} (d);
      \draw (a) edge[arc, bend right=9] node[below] {$([4, 6], c)$} (d);
    \end{tikzpicture}
  \caption{Timed transitive closure $\TTC(\G)$ of the temporal graph $\mathcal{G}$ in Figure~\ref{fig:temporal-graph}, considering $\delta = 1$. On the left, the version with existential \Rtuple{}, whose intervals are depicted by labels. On the right, the version with constructive \Rtuples, depicting also the successor. 
  }
    \label{fig:timed-transitive-closure}
\end{figure}
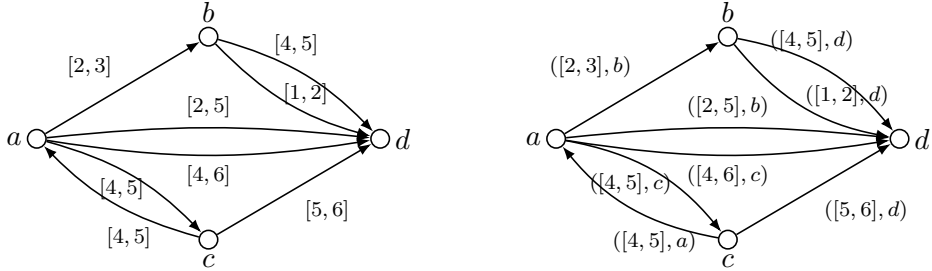

Figure~\ref{fig:timed-transitive-closure} shows two examples of \TTCs, both corresponding to the temporal graph of Figure~\ref{fig:temporal-graph} (one for existential \Rtuples{}, the other for constructive \Rtuples{}). Algorithmically, a \TTC provides most of the support needed to realize the high-level operations of our data structure. For example, the operation \algb{} amounts to testing if there exists an arc whose associated \Rtuple is $(u, v, t^-, t^+)$ with $[t^-, t^+] \subseteq [t_1, t_2]$. The operation \algc{} can be realized by performing such a test for every pair of vertices.
The operation \alga{} reduces to adding a new arc to the \TTC{}(\G) if no smaller interval already captures this information. If the new arc is added, then some other arcs may become redundant and should be removed, some others may also be created by composition. This operation is therefore the most critical. Finally, if constructive \Rtuples are used, then an actual journey may be reconstructed quite efficiently from \TTC{}(\G) when \algd{} is called, by retrieving a constructive \Rtuple{} $(u, v, t^-, t^+, w)$ such that $[t^-, t^+] \subseteq [t_1, t_2]$ and unfolding the corresponding journey inductively, by replacing $u$ with the successor vertex $w$ and $t^-$ with $t^- + \delta$ in each step.

All algorithms for these operations are described in Section~\ref{sec:data-structure}. Before doing so, we present an explicit encoding of \TTCs based on adjacency matrices and binary search trees (BST). In order for the high-level algorithms to remain independent from this particular choice, we define a set of primitives for manipulating the \TTC, that are used by the high-level algorithms of Section~\ref{sec:data-structure}.

\subsubsection{Encoding the TTC}
\label{subsec:data-layout}

We encode the \TTC by an $n \times n$ matrix, in which every entry $(i,j)$ points to a self-balanced binary search tree (BST) denoted by $T(i,j)$. The nodes in this tree contain all the time intervals corresponding to \Rtuples in $\R(i, j)$.
From Lemma~\ref{lemma:number}, we know that a tree $T(u, v)$ contains up to $\tau$ nodes.
In addition, all these intervals are incomparable, thus one can use any of their boundaries (departure or arrival) as the sorting key of the BST.
Note that retrieving $T(u, v)$ within the matrix takes constant time, as the cells of a matrix are directly accessed.
Also recall that finding the largest key below (resp. the smallest key above) a certain value takes $O(\log{\tau})$ time. Similarly, inserting a new element (in our case, an interval) takes $O(\log\tau)$ time. Finally, observe that several types of BST (e.g. red-black trees) can self-balance without impacting the asymptotic cost of insertions.

We provide the following low-level operations for manipulating \TTCs: (1)
\Call{find\mund{}next}{$u, v, t$} returns the containing the earliest interval $[t^-, t^+]$ in $T(u, v)$ such that $t^- \geq t$, if any, and nil otherwise; symmetrically, (2)
\Call{find\mund{}previous}{$u, v, t$} returns the node containing the latest interval $[t^-, t^+]$ in $T(u, v)$ such that $t^+ \leq t$, if any, and nil otherwise;
finally, (3) \Call{insert}{$u, v, t^-, t^+$} inserts a new node containing the interval $[t^-, t^+]$ in $T(u, v)$ and performs some operations for maintaining the property that all intervals in $T(u,v)$ are minimal.

Let us now describe the algorithms that perform these operations, along with their time complexities.
The algorithm for \Call{find\mund{}next}{$u, v, t$} searches $T(u, v)$ recursively, by comparing $t$ with the departure $t^-$ of the current node interval $[t^-, t^+]$.
If $t^-$ is equals to or greater than $t$, then the current node is a candidate answer.
The algorithm then compares the current node candidate and the previous one, and keeps the one containing the smallest (earliest) $t^-$, then it descends the left child.
Otherwise, if $t^-$ is smaller than $t$, it simply descends the right child.
As soon as a leaf is reached (and visited), the algorithm returns the current candidate as the answer.
The algorithm for \Call{find\mund{}previous}{$u, v, t$} works symmetrically.
The time complexities of both algorithms correspond to the depth of the tree, which is $O(\log{\tau})$.

\begin{figure}
  \newcommand{\drawhighlighted}[3]{
    \draw{(#1, #3) edge[semithick] (#2, #3)};
    \draw{(#1, #3 -0.25) edge[dashed, very thin] (#1, 0.25)};
    \draw{(#2, #3 -0.25) edge[dashed, very thin] (#2, 0.25)};
    \node[] at (#1 +2.5, #3 +0.25) {\footnotesize$\mathcal{I}$};
  }
  \newcommand{\drawnotincludes}{
    \draw{(1, 0) edge[semithick] (5, 0)};
    \draw{(2, -0.25) edge[semithick] (6, -0.25)};
    \draw{(6, -1.25) edge[semithick] (10, -1.25)};
    \draw{(7, -1.50) edge[semithick] (11, -1.50)};
    \draw{(8, -1.75) edge[semithick] (12, -1.75)};
    \draw{(9, -2.00) edge[semithick] (13, -2.00)};
  }
  \centering
  \begin{subfigure}[b]{0.3\columnwidth}
    \centering
    \begin{tikzpicture}[scale=0.4]
      \drawnotincludes{}
      \drawhighlighted{5}{7}{-3.50};
      \draw (3, -0.50) edge[semithick, red] (7, -0.50);
      \node[text=red] at (7.6, -0.25) {\footnotesize$\mathcal{I}_1$};
      \draw (4, -0.75) edge[semithick, red] (8, -0.75);
      \draw (5, -1.00) edge[semithick, red] (9, -1.00);
      \node[text=red] at (9.6, -0.75) {\footnotesize$\mathcal{I}_2$};
    \end{tikzpicture}
    \caption{Find}\label{fig:find-insertion}
  \end{subfigure}~%
  \begin{subfigure}[b]{0.3\columnwidth}
    \centering
    \begin{tikzpicture}[scale=0.4]
      \drawnotincludes{}
      \drawhighlighted{5}{7}{-3.50};
    \end{tikzpicture}
    \caption{Remove}\label{fig:remove-insertion}
  \end{subfigure}~%
  \begin{subfigure}[b]{0.3\columnwidth}
    \centering
    \begin{tikzpicture}[scale=0.4]
      \drawnotincludes{}
      \phantom{\drawhighlighted{5}{7}{-3.50};}
      \draw (5, -0.625) edge[semithick, red] (7, -0.625);
      \node[text = red] at (7.5, -0.375) {\footnotesize$\mathcal{I}$};
    \end{tikzpicture}
    \caption{Insert}\label{fig:insert-insertion}
  \end{subfigure}
  \caption{Basic steps to perform $\Call{insert}{u, v, \mathcal{I}}$ where $\mathcal{I} = [t_1, t_2]$. First, in~(a), an algorithm must find the candidate intervals that could become redundant after the inserting $\mathcal{I}$. These intervals are exactly the ones between $\mathcal{I}_1 = \Call{find\mund{}previous}{u, v, t^+}$ and $\mathcal{I}_2 = \Call{find\mund{}next}{u, v, t^-}$. Note that there are cases in which $\mathcal{I}_1$ or $\mathcal{I}_2$ do not exist.
  Next, in~(b), all intervals $\mathcal{I}'$ between (and including) $\mathcal{I}_1$ and $\mathcal{I}_2$ such that $\mathcal{I}' \subseteq \mathcal{I}$ must be removed.
  Finally, in~(c), the algorithm inserts $\mathcal{I}$ in the correct place.}\label{fig:insertion}
\end{figure}

The algorithm for \Call{insert}{$u, v, t^-, t^+$} finds and removes any potential node with interval $\mathcal{I}_i$ such that $[t^-, t^+] \subseteq \mathcal{I}_i$, then it inserts a new node containing $[t^-, t^+]$ using a standard BST insertion.
Figure~\ref{fig:insertion} gives a linear representation of the intervals in $T(u, v)$ while performing this operation.
A naive implementation of this operation would consist of searching and removing each corresponding node independently. However, this would lead to a complexity of $O(d\log\tau)$ time, where $d$ is the number of nodes removed, that is up to $O(\tau)$. We use a non-standard approach that makes it feasible in $O(d + \log\tau)$ time only. The strategy is to identify in $T(u, v)$ the nodes containing, respectively, the boundary intervals $\mathcal{I}_1 = \Call{find\mund{}previous}{u, v, t^+}$ and $\mathcal{I}_2 = \Call{find\mund{}next}{u, v, t^-}$, which correspond to the first and last nodes to be removed (note that the parameters are indeed $t^+$ and $t^-$, not the reverse).
Then, every node containing intervals in this range is removed using the technique outlined in the proof of Lemma~\ref{lem:cost-insert}.

\begin{lemma}
  \label{lem:cost-insert}
  In the worst case, the cost of the \textsc{insert} operation is $O(d + \log{\tau})$, where $d$ is the number of elements removed from $T(u, v)$. The amortized cost of an insertion is $O(\log{\tau})$.
\end{lemma}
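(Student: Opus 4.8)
The plan is to break the \textsc{insert} operation into three phases and bound each. First recall that, by Lemma~\ref{lemma:number}, $T(u,v)$ stores at most $\tau$ intervals, so a self-balanced BST on it has depth $O(\log\tau)$; every ``$O(\log\tau)$'' below is with respect to $\tau$. Phase~(i) locates the two boundary nodes $\mathcal{I}_1 = \Call{find\mund{}previous}{u,v,t^+}$ and $\mathcal{I}_2 = \Call{find\mund{}next}{u,v,t^-}$, which costs $O(\log\tau)$ by the analysis of those two routines above; Phase~(iii) inserts a fresh node carrying $[t^-,t^+]$ by a standard BST insertion with rebalancing, again $O(\log\tau)$. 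Everything thus reduces to Phase~(ii): deleting, in $O(d+\log\tau)$ time, the $d$ nodes whose intervals must disappear, namely those $\mathcal{I}'$ with $[t^-,t^+]\subseteq\mathcal{I}'$. The key structural observation is that these $d$ nodes occupy a \emph{contiguous} segment of the in-order traversal of $T(u,v)$: since all intervals stored in $T(u,v)$ are pairwise incomparable (the invariant that keeps $T(u,v)$ irredundant), the order by departure coincides with the order by arrival and both endpoints increase strictly along it, so the removal condition ``departure $\le t^-$ \emph{and} arrival $\ge t^+$'' is the intersection of a prefix condition with a suffix condition and hence picks out one block of consecutive nodes, lying between $\mathcal{I}_1$ and $\mathcal{I}_2$; its two end nodes are read off from $\mathcal{I}_1$ and $\mathcal{I}_2$ in $O(\log\tau)$.

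Next I would show that such a contiguous block of $d$ nodes can be excised in $O(d+\log\tau)$ --- not by $d$ independent deletions, which would cost $O(d\log\tau)$, but by a bulk \textsc{split}/\textsc{join} manipulation: split $T(u,v)$ at the left end of the block, then at the right end, obtaining three valid balanced trees $L$ (nodes before the block), $M$ (the $d$ nodes to delete), and $R$ (nodes after); traverse and release $M$ in $O(d)$ time; then \textsc{join} $L$, a new single node holding $[t^-,t^+]$, and $R$ into one balanced tree in $O(\log\tau)$ (this also handles Phase~(iii)). On a red-black tree --- or any weight-balanced BST --- each \textsc{split} and \textsc{join} runs in $O(\log\tau)$, so the whole \textsc{insert} costs $O(\log\tau)+O(d)+O(\log\tau)=O(d+\log\tau)$, establishing the worst-case bound.

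For the amortized bound I would charge the $O(d)$ term to node creations. The structure is built only by insertions, and over any sequence of $m$ \textsc{insert} operations each one creates at most one BST node while each node is removed at most once; hence $\sum_{i=1}^m d_i \le m$, where $d_i$ is the number removed by the $i$-th operation. Summing the worst-case bound, the whole sequence runs in $\sum_{i=1}^m O(d_i+\log\tau)=O(\sum_i d_i)+O(m\log\tau)=O(m\log\tau)$, i.e.\ $O(\log\tau)$ amortized per \textsc{insert}. (Equivalently, use the potential $\Phi=c\cdot(\text{number of nodes currently stored})$ for a suitable constant $c$: each insertion pays $O(\log\tau)$ and raises $\Phi$ by $O(1)$, while deleting $d$ nodes is paid for by the $\Theta(d)$ drop in $\Phi$.)

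The one genuinely delicate point I anticipate is the ``contiguous block'' claim: it is what allows a constant number of $O(\log\tau)$-time splits and joins to replace $d$ separate deletions, and it rests entirely on the incomparability of the stored intervals, which also has to be used to settle the boundary cases --- whether $\mathcal{I}_1$ and $\mathcal{I}_2$ themselves fall inside the block or just delimit it. The remaining ingredient, namely fixing a balanced-BST variant with $O(\log\tau)$ \textsc{split} and \textsc{join} (red-black trees qualify), is routine.
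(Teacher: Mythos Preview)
Your proof is correct, and the amortized argument (charging deletions to prior creations, $\sum d_i \le m$) is essentially identical to the paper's. The difference lies in how Phase~(ii) is carried out. The paper does not use \textsc{split}/\textsc{join}; instead it performs an in-place range deletion: it locates the lowest common ancestor $\mathcal{I}_A$ of $\mathcal{I}_1$ and $\mathcal{I}_2$, then walks upward from each boundary to $\mathcal{I}_A$, deleting nodes and cutting off entire subtrees along the way, so that the cost is the cumulated length of the two walks. The paper then explicitly concedes that this may \emph{destroy balance} (``the resulting tree may have lost its balance if most of the nodes are deleted'') and argues only that the depth cannot increase, which suffices for subsequent $O(\log\tau)$ operations. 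Your \textsc{split}/\textsc{join} route is cleaner on this point: it stays within the standard balanced-BST toolbox and leaves the tree fully balanced after every \textsc{insert}, at the mild cost of assuming a BST variant (e.g.\ red-black) that supports those primitives in $O(\log\tau)$. Your explicit justification that the nodes to delete form a single in-order block---because incomparability makes the departure and arrival orders coincide, so the removal predicate is a prefix condition intersected with a suffix condition---is also more careful than the paper, which takes this contiguity for granted.
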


\begin{proof}
  The range of intervals to be removed is characterized by two boundary intervals $\I_1$ and $\I_2$, which can be found by calling both \textsc{find\mund{}next} and \textsc{find\mund{}previous} a single time, which takes $O(\log\tau)$ time. The final insertion of the input interval in the BST also takes $O(\log\tau)$. The difficult part is thus the removal of redundant intervals prior to this insertion (illustrated in Figure~\ref{fig:insertion}). Let $d$ be the number of intervals in the deletion range.
  We start by recalling the main ideas of range deletion in a BST (see~\cite{range-deletion} for a pedagogical explanation), then we discuss their use in the particular case of balanced BSTs, and finally we explain why the claimed cost is correct despite the fact that balance may be lost after the deletion.
  Let $\I_A$ be the common ancestor of $\I_1$ and of $\I_2$ (possibly equal to one of these).
  The process can be split into two phases, the first one is to walk upward from $\I_1$ to $\I_A$ and the second is to walk upward (again) from $\I_2$ to $\I_A$. As either walk proceeds, potential deletions are performed in intermediate tree nodes. Some of these deletions remove the node itself, replacing it by one of its child in constant time. The intermediate branches are cut without being explored. The final cost of $O(d + \log\tau)$ actually follows from the cumulated length of both walks.
  The down side with this technique (which may explain why it is not standard in balanced BSTs) is that the resulting tree may have lost its balance if most of the nodes are deleted. However, as the depth cannot increase, and as we only need that it remains of $O(\log\tau)$ for subsequent use of the tree, the solution is good enough for our needs. Finally, observe that the number of intervals removed cannot exceed the number of intervals previously inserted, which is why the amortized cost of an insertion remains of $O(\log{\tau})$.
\end{proof}

Additionally, we define the following basic operations:
\begin{itemize}
\item $\mathcal{N}^*_{out}(u)$ : Returns the set of vertices $\{ v_1, v_2, \ldots, v_k \}$ such that there exists at least one arc from $u$ to $v_i$ in the \TTC
\item $\mathcal{N}^*_{in}(u)$ : Returns the set of vertices $\{ v_1, v_2, \ldots, v_l \}$ such that there exists at least one arc from $v_i$ to $u$ in the \TTC
\end{itemize}
Both operations can be realized in $O(n)$ time, through traversing the corresponding row (resp. column) of the matrix and testing if the corresponding tree is empty.

\section{The four operations}\label{sec:data-structure}

In this section, we describe the algorithms which perform the four operations of our data structure, previously described in Section~\ref{sec:primitives}. These operations are \algb{}, \algc{}, \alga{}, and (optionally) \algd{}.
For simplicity, the first three algorithms are presented using existential \Rtuples only (however they are straightforwardly adaptable to constructive \Rtuples). All the algorithms rely on the primitives defined in Section~\ref{subsec:data-layout} for manipulating the \TTC in an abstract way.

\subsection{Reachability and connectivity queries}\label{ssec:query-operation1}


The algorithm for performing \algb{} is straightforward. It consists of testing whether $T(u, v)$ contains at least one interval that is included in $[t_1, t_2]$.
This can be done by retrieving $[t^-, t^+] = \Call{find\mund{}next}{u, v, t_1}$  and checking that $t^+ \leq t_2$.
Therefore, the cost of this algorithm reduces essentially to that of the operation \Call{find\mund{}next}{$u, v, t_1$}, which takes $O(\log\tau)$ time.
We note that, if $[t_1, t_2] = \mathcal{T}$, then it is sufficient to verify (in constant time) that $T(u, v)$ is not empty.
Regarding the operation \algc{}, a simple way of answering it is to call \algb{} for every pair of vertices, with a resulting time complexity of $O(n^2 \log{\tau})$. It seems plausible that this strategy is not optimal and could be improved in the future.

\newpage
\subsection{Update operation}\label{ssec:update-operation}

The algorithm for \alga{} manages the insertion of a new contact $(u, v, t)$ in the data structure, where $(u, v) \in{E}$ and $t \in{\tau}$. To start, the interval corresponding to the trivial journey from $u$ to $v$ over $[t,t+\delta]$ is inserted in $T(u,v)$ using the \textsc{insert} primitive. (Recall that this primitive encapsulate the removal of redundant intervals in $T(u,v)$, if any.)
Then, the core of the algorithm consists of computing the indirect consequences of this insertion for the other vertices. Namely, if a vertex $w^-$ could reach $u$ before time $t$ with latest departure $t^-$ and $v$ could reach another vertex $w^+$ after time $t+\delta$ with earliest arrival $t^+$, it follows that $w^-$ can now reach $w^+$ over interval $[t^-, t^+]$. Our algorithm consists of enumerating these compositions and inserting them in the \TTC.
Interestingly, for each predecessor $w^-$ of $u$, only the \emph{latest} interval ending before $t$ in $T(w^-, u)$ needs to be considered. The reason
is that in order to compose an earlier journey $\J$ with the new contact, we need to wait at $u$ until time $t$. Thus, even though some other journeys started earlier, it would have to wait at $u$ and it would thus eventually arrive at the same time (based on a non-minimal interval). Based on this property, our algorithm only searches for the latest interval preceding $t$ for each predecessor of $u$ and the earliest interval exceeding $t+\delta$ for each successor of $v$.

The details are given in Algorithm~\ref{alg:1}, whose behavior is as follows.
In Line 1, the algorithm inserts the interval $[t, t + \delta]$ into $T(u, v)$, which corresponds to the trivial journey induced by the new contact.
In Lines 2 to 7, for every vertex $w^- \in \mathcal{N}^*_{in}(u)$, it finds the latest interval $[t^-, \_]$ in $T(w^-, u)$ that arrives before time $t$ (inclusive) and inserts the composition $[t^-, t + \delta]$ into $T(w^-, v)$.
For the same reasons as above, the algorithm only needs considering inserting $[t^-, t + \delta]$ because every other possible composition would contain it as a subinterval.
In Lines 8 to 11, for every vertex $w^+ \in \mathcal{N}^*_{out}(v)$, the algorithm finds the earliest interval $[\,\_\,, t^+] $ in $T(v, w^+)$ that leaves $v$ after time $t + \delta$ (inclusive), and inserts the composition $[t, t^+]$ into $T(u, w)$.
In the same way, every other possible composition would contain $[t, t^+]$ as a subinterval.
Finally, in Lines 12 to 14, for all $w^- \in \mathcal{N}^*_{in}(u)$ and $w^+ \in \mathcal{N}^*_{out}(v)$, it inserts the composition $[t^-, t^+]$ into $T(w^-, w^+)$.
In order to optimize this last step, the algorithm only considers the subset of $\mathcal{N}^*_{in}$ whose reachability to $v$ has been impacted by the new contact, thanks to a dedicated storage $D$ computed in Line 7.


\begin{algorithm*}
  \caption{\alga}\label{alg:1}
  \begin{algorithmic}[1]
    \Require{$t \in{\mathcal{T}}, u,v \in V$ with $u \neq{} v$}
    \State{$\Call{insert}{u, v, t, t + \delta}$}
    \State{$D \gets{} \{\}$}
    \ForAll{$w^- \in{\mathcal{N}^*_{in}(u)}$}
      \State{$[t^-, \_] \gets{} \Call{find\mund{}previous}{w^-, u, t}$}
      \If{$t^- \ne nil$}
        \State{$\Call{insert}{w^-, v, t^-, t + \delta}$}
        \State{$D \gets{} D \cup{} (w^-, t^-)$}
      \EndIf{}
    \EndFor{}
    \ForAll{$w^+ \in{\mathcal{N}^*_{out}(v)}$}
      \State{$[\_\,, t^+] \gets{} \Call{find\mund{}next}{v, w^+, t + \delta}$}
      \If{$t^+ \ne nil$}
        \State{$\Call{insert}{u, w^+, t, t^+}$}
        \ForAll{$(w^-, t^-) \in{D}$}
          \If{$w^- \neq w^+$}
            \State{$\Call{insert}{w^-, w^+, t^-, t^+}$}
          \EndIf{}
        \EndFor{}
      \EndIf{}
    \EndFor{}
  \end{algorithmic}
\end{algorithm*}

\begin{theorem}
  The update operation has amortized time complexity $O(n^2 \log\tau)$. In the worst case, a single update operation costs $O(n^2 \tau)$ time.
\end{theorem}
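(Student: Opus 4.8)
The plan is to analyze Algorithm~\ref{alg:1} line by line, charging each \textsc{insert} call according to Lemma~\ref{lem:cost-insert}, and then separate the two regimes (amortized versus worst-case) at the end. First I would bound the number of \textsc{insert} calls. Line~1 contributes one call. The loop of Lines~3--8 iterates over $\mathcal{N}^*_{in}(u)$, which has at most $n-1$ elements (found via $\mathcal{N}^*_{in}$ in $O(n)$ time), and performs at most one \textsc{find\mund{}previous} and one \textsc{insert} per iteration, so $O(n)$ calls total. The loop of Lines~9--17 iterates over $\mathcal{N}^*_{out}(v)$ (again $O(n)$ elements), and for each it does one \textsc{find\mund{}next}, one \textsc{insert} on $T(u, w^+)$, and then an inner loop over $D$. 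Since $|D| \le |\mathcal{N}^*_{in}(u)| \le n-1$, the inner loop triggers at most $O(n)$ additional \textsc{insert} calls per outer iteration, for a total of $O(n^2)$ \textsc{insert} calls. So the dominant count is $\Theta(n^2)$ insertions into BSTs, plus $O(n)$ find-operations.

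Next I would charge each of these $O(n^2)$ \textsc{insert} calls its cost from Lemma~\ref{lem:cost-insert}. For the \textbf{worst-case bound}: a single \textsc{insert} costs $O(d + \log\tau)$, where $d$ is the number of intervals removed from the affected tree; since every tree holds at most $\tau$ intervals (Lemma~\ref{lemma:number}), we have $d \le \tau$, so each call is $O(\tau + \log\tau) = O(\tau)$. Multiplying by the $O(n^2)$ calls and adding the $O(n)$ cost of computing the neighbourhoods and the $O(\log\tau)$ find-operations gives $O(n^2\tau)$ in the worst case. For the \textbf{amortized bound}: here I would invoke the amortized clause of Lemma~\ref{lem:cost-insert}, namely that the removal cost telescopes because the number of intervals ever removed cannot exceed the number ever inserted. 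Concretely, I set up a potential/accounting argument: assign each \textsc{insert} a credit of $O(\log\tau)$; the $O(\log\tau)$ part of its actual cost is paid directly, and the $O(d)$ removal part is paid by credits deposited when those $d$ intervals were themselves inserted (each interval is removed at most once). Summing over a sequence of $m$ calls to \alga{}, the total work is $O\big(\sum (\text{insertions per call}) \cdot \log\tau\big) = O(m \cdot n^2 \log\tau)$, hence amortized $O(n^2\log\tau)$ per update. The $O(n)$ neighbourhood scans and the $O(n\log\tau)$ find-operations per update are absorbed into this bound.

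The main subtlety — and the step I would be most careful about — is making the amortization rigorous across calls that touch \emph{different} trees $T(i,j)$: the credit scheme must be global over all $n^2$ trees, and one must check that an interval inserted into $T(i,j)$ is only ever removed from that same $T(i,j)$, so credits are never "spent twice." This is immediate from the fact that \textsc{insert}$(u,v,\cdot,\cdot)$ only modifies $T(u,v)$, but it is worth stating explicitly. A second minor point to address is that Lemma~\ref{lem:cost-insert} already folds the balance-loss issue into its statement (the tree depth never increases and stays $O(\log\tau)$), so no extra argument is needed there; I would just cite it. Finally, I would remark that the $O(n^2)$ bound on the number of insertions per update is tight in the worst case (e.g. when $\mathcal{N}^*_{in}(u)$ and $\mathcal{N}^*_{out}(v)$ are both $\Theta(n)$ and each target tree is near-full), which is what forces the $O(n^2\tau)$ worst-case bound and shows the analysis cannot be improved by this accounting alone.
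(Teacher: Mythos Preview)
Your proposal is correct and follows essentially the same approach as the paper: count the $O(n^2)$ \textsc{insert} calls produced by the nested loops, then multiply by the per-call cost from Lemma~\ref{lem:cost-insert} (amortized $O(\log\tau)$, worst-case $O(d+\log\tau)$ with $d=O(\tau)$). Your write-up is in fact more careful than the paper's, which simply cites Lemma~\ref{lem:cost-insert} without spelling out the global credit argument across the $n^2$ trees; that extra care is welcome but not a different route.
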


\newcommand{\costInsert}{\ensuremath{d + \log\tau}\xspace}

\begin{proof}
  An \textsc{insert} operation is performed in Line~1.
  The loop from Line 3 to 7 iterates over $O(n)$ vertices and makes at most one insertion for each. The loop from Line 8 to 14 iterates over $O(n)$ vertices, and for each one, iterates in a nested way over $O(n)$ vertices. For each resulting pair, it performs at most one \textsc{insert} operation. The latter clearly dominates the overall cost of the algorithm, with a cost of $O(n^2)$ times the cost of the \textsc{insert} operation, the latter being of amortized time $O(\log\tau)$ and otherwise of $O(d + \log\tau)$ time (Lemma~\ref{lem:cost-insert}), with $d=O(\tau)$ in the worst case.
\end{proof}

\subsection{Journey reconstruction}\label{ssec:query-operation2}

The algorithm for the operation \algd{} reconstructs a journey from vertex $u$ to vertex $v$ whose contact timestamps must be contained in $[t_1, t_2]$.
As explained in Section~\ref{sec:constructive}, existential \Rtuples can be augmented by a \emph{successor} field that indicates which vertex comes next in (at least one of) the journeys represented by the \Rtuple. This information is very useful for reconstruction and has a negligible cost (asymptotically speaking). Concretely, one can make the nodes of the BST store the successor field in addition to the interval. The low-level operations for manipulating the TTC (see Section~\ref{subsec:data-layout}) are unaffected, neither are the query and update algorithms in a significant way.
The only subtlety is that when two intervals (nodes) are composed, the successor field of the resulting node corresponds to the successor field of the first node (this was already discussed in terms of \Rtuples in Section~\ref{sec:timed-transitive-closure}).

The goal of the algorithm is thus to reconstruct a journey by unfolding the intervals and successor fields. Details are given in Algorithm~\ref{alg:2}.
\begin{algorithm*}
  \caption{\algd}\label{alg:2}
  \begin{algorithmic}[1]
    \Require{$[t_1, t_2] \subseteq{} \mathcal{T}, u,v \in V, u \neq{} v$}
    \State{$([t^-, t^+], w) \gets{} \Call{find\mund{}next}{u, v, t_1}$}
    \Comment{node augmented with successor}
    \If{the return value is $nil$ or $t^+ \leq t_2$}
      \State{\Return{$nil$}}
      \Comment{no interval contained in $[t_1, t_2]$ in $T(u,v)$}
    \EndIf{}
    \State{$\mathcal{J} \gets{} \{(u, w, t^-)\}$}

    \While{$w \neq v$}
      \State{$([t, \_\,], w') \gets{} \Call{find\mund{}next}{w, v, t^- + \delta}$}
      \State{$\mathcal{J} \gets{} \mathcal{J} \cdot \{(w, w', t)\}$}
      \State{$w \gets{} w'$}
      \State{$t^- \gets{} t$}
    \EndWhile{}

    \State{\Return{$\mathcal{J}$}}
  \end{algorithmic}
\end{algorithm*}
The first step (Lines 1 to 3) is to retrieve a node in $T(u,v)$ whose interval is contained within $[t_1, t_2]$ if one exists. If several choices exist, the earliest is selected (through calling the \textsc{find\mund{}next} primitive). Then, the algorithm iteratively replaces $u$ with the successor and searches for the next interval until the successor is $v$ itself (Lines 5 to 9), adding gradually the corresponding contacts to a journey $\J$ (Line 4 and Line 7), which is ultimately returned in Line 10.

\begin{theorem}
  Algorithm~\ref{alg:2} has time complexity $O(k\log\tau)$, where $k$ is the length of the resulting journey.
\end{theorem}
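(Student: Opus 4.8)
The plan is to analyze Algorithm~\ref{alg:2} line by line and show that each of its $O(k)$ steps costs $O(\log\tau)$, where $k$ is the number of contacts in the returned journey. First I would observe that the preamble (Lines 1--4) performs a single call to \textsc{find\mund{}next} on $T(u,v)$, which costs $O(\log\tau)$ by the analysis of that primitive in Section~\ref{subsec:data-layout}, plus $O(1)$ work to test the boundary condition and initialize $\J$. So the preamble is within budget, and it remains to bound the \textbf{while} loop.

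Next I would argue that the \textbf{while} loop body executes exactly $k-1$ times (one iteration per contact added after the first). The key invariant is that at the start of each iteration, the current vertex $w$ and timestamp $t^-$ correspond to a prefix of a valid journey from $u$ to $v$ within $[t_1,t_2]$, and crucially that $w$ can still reach $v$ within the remaining interval $[t^-+\delta,\,t^+]\subseteq[t_1,t_2]$. This follows because the successor field of a constructive \Rtuple guarantees that some journey in $\J(r)$ begins with the contact $(u,w,t^-)$; after removing that first contact, the tail is itself a journey from $w$ to $v$ departing no earlier than $t^-+\delta$ and arriving no later than $t^+$, hence captured (or dominated, by Lemma~\ref{lemma:discard}) by some \Rtuple in $\R(w,v)$. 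Therefore the call $\Call{find\mund{}next}{w,v,t^-+\delta}$ in Line~6 always returns a non-nil node, and $w$ strictly advances along a finite acyclic prefix toward $v$, so the loop terminates after at most $k-1$ iterations; since the returned $\J$ has exactly as many contacts as loop iterations plus one, this count is exactly $k-1$.

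Then I would bound the per-iteration cost: Line~6 is a single \textsc{find\mund{}next} call costing $O(\log\tau)$, Lines~7--9 are $O(1)$ updates appending one contact to $\J$ and reassigning $w,t^-$. Summing over the $k-1$ iterations gives $O(k\log\tau)$, which dominates the $O(\log\tau)$ preamble, yielding the claimed $O(k\log\tau)$ total.

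The main obstacle is the termination/correctness argument for the loop --- specifically, establishing that $\Call{find\mund{}next}{w,v,t^-+\delta}$ never fails and that following successor pointers cannot cycle or overshoot $v$. This requires carefully stating the loop invariant (that the interval found at each step is nested inside the previous remaining interval, and that the successor field is consistent across the compositions performed by \alga) and checking that the successor-propagation rule from Section~\ref{sec:constructive} --- where a composed \Rtuple inherits the successor of its first operand --- indeed makes the unfolded sequence of contacts time-respecting with all timestamps in $[t_1,t_2]$. Once that invariant is in place, the complexity bound itself is routine.
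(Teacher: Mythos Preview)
Your proposal is correct and follows essentially the same approach as the paper: count the calls to \textsc{find\mund{}next} (one in the preamble, one per loop iteration), observe that each costs $O(\log\tau)$, and note that the remaining work per step is $O(1)$. The paper's own proof is in fact terser than yours---it simply asserts that \textsc{find\mund{}next} is called as many times as the length of the reconstructed journey and that all other costs are dominated---whereas you additionally spell out the loop invariant and the termination argument via the successor-propagation rule; this extra care is not required for the complexity bound per se but is a reasonable way to justify that the loop really does run $k-1$ times.
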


\begin{proof}
  The algorithm calls \textsc{find\_next} in Line~1. After that, it is known whether a journey can be reconstructed. If so, a journey prefix $\J$ is initialized with the first contact of the reconstructed journey (indeed, such a contact must exist due to the minimality of the interval). Then, in the loop from Line~5 to Line~9, the algorithm extends $\J$ by one contact for each call to \textsc{find\_next} until $\J$ contains the entire journey. Overall, \textsc{find\_next} is thus called as many times as the length of the reconstructed journey, which corresponds to $O(|\J| \log\tau)$ time. The costs of the other operations are clearly dominated by this cost.
\end{proof}

\subsubsection{Properties of the reconstructed journeys}

In general, several journeys may exist that satisfy the query parameters. We observe that the specific choices made in Algorithm~\ref{alg:2} imply additional properties.

\begin{lemma}
  \label{lem:foremost}
  \textup{The journey $\J$ which is returned by Algorithm~\ref{alg:2} is a \emph{foremost} journey in the requested interval (i.e., it arrives at the earliest possible time at $v$). Furthermore, among all the possible foremost journeys, it is also a \emph{fastest} journey (i.e., the difference between departure time and arrival time is minimized).}
\end{lemma}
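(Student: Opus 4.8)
The plan is to pin down exactly which interval of $\R(u,v)$ is realized by the journey $\J$ returned by Algorithm~\ref{alg:2}, and then to compare any competing journey against it via that competing journey's minimal sub-\Rtuple. Everything hinges on one elementary fact about the tree $T(u,v)$: since all of its intervals are pairwise incomparable, for any two of them $[a,b]$ and $[a',b']$ one has $a<a'\iff b<b'$. Consequently the call \Call{find\mund{}next}{$u,v,t_1$} in Line~1, which by definition returns the interval of $\R(u,v)$ with the \emph{smallest departure} that is at least $t_1$, also returns the one with the \emph{smallest arrival} among those; write $[t^-_0,t^+_0]$ for this interval, so that every $(u,v,a,b)\in\R(u,v)$ with $a\ge t_1$ satisfies $a\ge t^-_0$ and $b\ge t^+_0$. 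If no such interval exists, or if $t^+_0>t_2$, the algorithm returns $nil$ and there is nothing to prove, so assume $t_1\le t^-_0\le t^+_0\le t_2$.

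First I would establish, by induction over the loop in Lines~5--9, the invariant: at the start of each iteration, $\J$ is a valid journey from $u$ to the current vertex $w$ with $departure(\J)=t^-_0$ and last contact at the current value of $t^-$, and there exists a journey from $w$ to $v$ departing no earlier than $t^-+\delta$ and arriving no later than $t^+_0$. The base case is Line~4 together with the definition of a constructive \Rtuple: the successor $w$ obtained in Line~1 comes from a journey $\mathcal{J}_0$ from $u$ to $v$ with departure $t^-_0$, arrival $t^+_0$, and first contact $(u,w,t^-_0)$, and the remainder of $\mathcal{J}_0$ serves as the journey from $w$ to $v$ required by the invariant. For the inductive step, the journey from $w$ to $v$ promised by the invariant includes a minimal sub-\Rtuple of $\R(w,v)$ whose departure is $\ge t^-+\delta$, so \Call{find\mund{}next}{$w,v,t^-+\delta$} returns a non-$nil$ interval $[t,b]$ with $t\ge t^-+\delta$ and, by the incomparability fact, $b\le t^+_0$; appending the corresponding contact and reading off its successor field re-establishes the invariant after Lines~6--9. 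As in the complexity analysis of Algorithm~\ref{alg:2}, the loop terminates with a (simple) journey of fewer than $n$ contacts; when it does, $w=v$, so $\J$ is a valid journey from $u$ to $v$ with $departure(\J)=t^-_0$ and $arrival(\J)\le t^+_0$, hence $\J$ lies in $\mathcal{G}_{[t_1,t_2]}$.

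Next I would prove a uniform lower bound on arrival times: every journey $\mathcal{J}'$ from $u$ to $v$ within $[t_1,t_2]$ is represented by the \Rtuple $(u,v,departure(\mathcal{J}'),arrival(\mathcal{J}'))$, which by construction of $\R(\G)$ includes a minimal sub-\Rtuple $(u,v,s^-,s^+)\in\R(u,v)$ with $t_1\le departure(\mathcal{J}')\le s^-\le s^+\le arrival(\mathcal{J}')$; by the first paragraph, $s^-\ge t_1$ forces $s^+\ge t^+_0$, hence $arrival(\mathcal{J}')\ge t^+_0$. Applying this to $\J$ itself and combining with $arrival(\J)\le t^+_0$ from the invariant gives $arrival(\J)=t^+_0$, so $\J$ is foremost. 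For the second claim, every foremost journey arrives at $t^+_0$, so minimizing the duration amounts to maximizing the departure; for a foremost journey $\mathcal{J}'$ with minimal sub-\Rtuple $(u,v,q^-,q^+)\in\R(u,v)$ one has $t_1\le departure(\mathcal{J}')\le q^-\le q^+\le arrival(\mathcal{J}')=t^+_0$, and since also $q^+\ge t^+_0$ we get $q^+=t^+_0$, whence $q^-=t^-_0$ by the incomparability fact; therefore $departure(\mathcal{J}')\le t^-_0=departure(\J)$, so $\J$ maximizes the departure among foremost journeys and is thus a fastest foremost journey.

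I expect the main obstacle to be the loop invariant, and especially its termination component: one has to be sure that chaining \textsc{find\_next} calls through successor fields keeps the reconstruction moving toward $v$ with an arrival budget that never exceeds $t^+_0$ and that the process halts, which is why I lean on the fact---already invoked in the complexity theorem for Algorithm~\ref{alg:2}---that the reconstructed object is a genuine journey, hence a simple path of length $k<n$. Once the invariant is in place, both the foremost and the fastest-among-foremost conclusions are short comparisons against the distinguished interval $[t^-_0,t^+_0]$, and I do not anticipate difficulty there.
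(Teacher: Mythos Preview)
Your proof is correct and shares the paper's core idea: the incomparability of the intervals in $T(u,v)$ forces the earliest-departing interval (returned by \textsc{find\_next} in Line~1) to also be the earliest-arriving one, and minimality then pins down the departure as the latest compatible with that arrival. The paper's own proof is a two-sentence sketch that simply asserts $\J$ realizes the interval $[t^-_0,t^+_0]$ found in Line~1 and then makes the same two observations you make about foremostness and fastestness.

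What you add is the loop-invariant argument showing that the reconstructed journey really does have $departure(\J)=t^-_0$ and $arrival(\J)=t^+_0$; the paper takes this for granted. Your induction is sound (in particular, the step ``$t\le a'$ implies $b\le b'\le t^+_0$'' via incomparability is exactly what is needed), and the termination case is fine because when the successor $w'$ equals $v$, the corresponding minimal \Rtuple in $\R(w,v)$ must be the trivial one $(w,v,t,t+\delta)$ (otherwise it would strictly contain that trivial \Rtuple), so $arrival(\J)=t+\delta=b\le t^+_0$. Thus your proof is a fleshed-out version of the paper's, not a genuinely different route; the extra rigor buys you an actual correctness argument for Algorithm~\ref{alg:2}, which the paper otherwise leaves implicit.
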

\begin{proof}
  The fact that $\J$ is a foremost journey follows from the call to \textsc{find\_next} in Line~1. Indeed, the interval returned by this call corresponds to the earliest departure from $u$, which happens to also correspond to the earliest arrival at $v$ because the stored intervals are incomparable. $\J$ thus achieves the earliest possible arrival time at $v$ in the given interval. And since all the stored intervals are \emph{minimal} (i.e. they do not contain smaller reachability intervals), it also follows that $departure(\J)$ is as late as possible among all the journeys arriving in $v$ at time $arrival(\J)$, which means $\J$ is as fast as possible among all foremost journeys.
\end{proof}

Let us insist that Lemma~\ref{lem:foremost} does not imply that $\J$ is both foremost and fastest in the requested interval. It only states that $\J$ is a foremost journey, and a fastest one \emph{among} the possible foremost journeys. Even faster journeys might exist in the requested interval, arriving later at $v$. The above property is however already convenient, e.g. in communication networks, where a message would arrive at destination as early as possible, while (secondarily) traveling for as little time as possible.

\section{Conclusion and open questions}\label{sec:conclusions}

We presented in this paper an incremental data structure to solve the dynamic connectivity problem in temporal graphs.
Our data structure places a high priority on the query time, by answering reachability questions in time $O(\log\tau)$.
Based on the ability to retrieve reachability information for particular time intervals, it supports the insertion of contacts in a non-chronological order in $O(n^2\log\tau)$ amortized time (deterministic worst-case $O(n^2\tau)$ time) and makes it possible to reconstruct efficiently foremost journeys within a given time interval, i.e., in time $O(k\log\tau)$, where $k$ is the size of the resulting journey. Our algorithms exploit the special features of non-redundant (minimal) reachability information, which we represent through the concept of \Rtuples. The core of our data structure, namely the \emph{timed transitive closure} (\TTC), is itself essentially a collection of irredundant \Rtuples, whose size (and that of the data structure itself) cannot exceed $O(n^2\tau)$.

The theory of \Rtuples, initiated in this paper, poses a number of further questions, some of which are of independent interest, some leading to possible improvements of the presented algorithms. For example, do \Rtuples{} involving different pairs of vertices possess further interdependence which may reduce the space needed to maintain \TTCs{}? More generally, how restricted are \TTCs intrinsically? On the practical side, can we improve the insertion time for new contacts by using another low-level structure than a balanced BST? Could the notion of contacts be generalized to contacts of arbitrary duration?
Finally, designing efficient data structures for the decremental and the fully-dynamic versions of this problem, with \emph{unsorted} contact insertion and deletion, seems to represent both a significant challenge and a natural extension of the present work, one that would certainly develop further our common understanding of temporal reachability.


\begin{paragraph}{Acknowledgements}
This study was financed in part by Fundação de Amparo à Pesquisa do Estado de Minas Gerais (FAPEMIG) and the Coordena\c{c}\~{a}o de Aperfei\c{c}oamento de Pessoal de N\'{i}vel Superior - Brasil (CAPES) - Finance Code 001* - under the ``CAPES PrInt program'' awarded to the Computer Science Post-graduate Program of the Federal University of Uberl\^{a}ndia, as well as the Agence Nationale de la Recherche through ANR project ESTATE (ANR-16-CE25-0009-03).
\end{paragraph}

\printbibliography{}

\end{document}